\newtheorem{theorem}{Theorem}
\newtheorem{lemma}[theorem]{Lemma}
\newtheorem{remark}[theorem]{Remark}
\newenvironment{proof}[1][Proof]{\noindent\textit{#1.} }{}
\begin{document}

\begin{center}
\textbf{Efficient estimation in the Topp-Leone distribution}\smallskip

Lazhar Benkhelifa\smallskip

(l.benkhelifa@yahoo.fr)\smallskip

\textit{Laboratory of Applied Mathematics, Mohamed Khider University,
Biskra, 07000, Algeria\smallskip }

\textit{Department of Mathematics and Informatics, Larbi Ben M'Hidi
University, Oum El Bouaghi, 04000, Algeria}\smallskip \smallskip \smallskip

\noindent \textbf{Abstract}
\end{center}

\noindent In the current paper, the estimation of the probability density
function and the cumulative distribution function of the Topp-Leone
distribution is considered. We derive the following estimators: maximum
likelihood estimator, uniformly minimum variance unbiased estimator,
percentile estimator, least squares estimator and weighted least squares
estimator. A simulation study shows that the maximum likelihood estimator is
more efficient than the others estimators.\bigskip

\noindent \textbf{Keywords: }Topp-Leone distribution, maximum likelihood
estimator, uniformly minimum variance unbiased estimator, percentile
estimator, least squares estimator, weighted least squares estimator.\medskip

\noindent \textbf{MSC class 2010 Subject Classification}: 62E15,
62N05.\smallskip 

\section{\textbf{Introduction}}

\noindent The univariate continuous Topp-Leone distribution with bounded
support was originally proposed by Topp and Leone (1955) and applied it as a
model for some failure data. The probability density function (PDF) and
cumulative distribution function (CDF) are respectively given, for $0<x<1\ $%
and $\alpha >0,$ by%
\begin{equation}
f\left( x\right) =\alpha \left( 2-2x\right) \left( 2x-x^{2}\right) ^{\alpha
-1}
\end{equation}%
and%
\begin{equation}
F\left( x\right) =\left( 2x-x^{2}\right) ^{\alpha }.
\end{equation}%
\noindent In recent years, the Topp-Leone distribution has received a huge
attention in the literature. For instance, Nadarajah and Kotz (2003) derived
the structural properties of this distribution including explicit
expressions for the moments, hazard rate function and characteristic
function. Kotz and Van Dorp (2004) proposed a generalized version of the
Topp-Leone distribution for modeling some financial data and studied its
properties. Ghitany et al. (2005) discussed some reliability measures of the
Topp-Leone distribution and their stochastic orderings. Kotz and Nadarajah
(2006) gave a bivariate generalization of this distribution. Ghitany (2007)
derived the asymptotic distribution of order statistics of this model.
Vicari et al. (2008) introduced a two-sided generalized version of the
distribution and discussed some of its properties. The moments of order
statistics from this distribution were discussed by Gen\c{c} (2012) and
MirMostafaee (2014). Gen\c{c} (2013) considered the estimation of the
stress-strength parameter for this distribution. Admissible minimax
estimates for the shape of this distribution were\textbf{\ }derived by
Bayoud (2016). Bayesian and non-Bayesian estimation of Topp-Leone
distribution based lower record values were obtained by MirMostafaee et al.
(2016).\medskip

\noindent In many situations we need to estimate the PDF, CDF or both. For
example, we use the PDF to estimate the differential entropy,
Kullback-Leibler divergence, R\'{e}nyi entropy and Fisher information; we
use the CDF to estimate the quantile function, the cumulative residual
entropy, Bonferroni and Lorenz curves whereas we can use the PDF and CDF
together for estimating the probability weighted moments, the hazard rate
function, the reverse hazard rate function and the mean deviation about
mean.\medskip

\noindent In this paper, we discuss the efficient estimation of the PDF and
the CDF for the Topp-Leone distribution by considering the following
estimators: maximum likelihood estimator (MLE), uniformly minimum variance
unbiased estimator (UMVUE), percentile estimator (PCE), least squares
estimator (LSE) and weighted least squares estimator (WLSE).\medskip

\noindent There are several similar studies for other distributions. We
mention: Pareto distribution by Dixit and Jabbari (2010, 2011), generalized
exponential-Poisson distribution by Bagheri et al. (2014), generalized
exponential distribution by Alizadeh et al.(2014), exponentiated Weibull
distribution by Alizadeh et al. (2015), Weibull extension model by Bagheri
et al. (2016), exponentiated Gumbel distribution by Bagheri et al.
(2016).\bigskip

\noindent The rest of the paper is organized as follows. In Sections 2 and
3, we derive the MLE and the UMVUE of the PDF and the CDF with their mean
squared errors (MSEs), respectively. The PCE, LSE and WLSE are considered in
Sections 4 and 5, respectively. Section 6 includes a simulation study in
order to compare the different proposed estimators.

\section{\textbf{MLE of the PDF and the CDF}}

\noindent Suppose $X_{1},\ldots,X_{n}$ is a random sample from the
Topp-Leone distribution given by (1) and (2). The MLE of $\alpha,$ denoted
by $\widehat{\alpha}$, is given by%
\begin{equation*}
\widehat{\alpha}=\frac{-n}{\dsum \limits_{i=1}^{n}\ln\left(
2x_{i}-x_{i}^{2}\right) }.
\end{equation*}
Therefore, using the invariance property of MLE, we obtain the MLEs of the
PDF (1) and the CDF (2) as follows%
\begin{equation*}
\widehat{f}\left( x\right) =\widehat{\alpha}\left( 2-2x\right) \left(
2x-x^{2}\right) ^{\widehat{\alpha}-1}
\end{equation*}
and%
\begin{equation*}
\widehat{F}\left( x\right) =\left( 2x-x^{2}\right) ^{\widehat{\alpha}},
\end{equation*}
respectively for $0<x<1$.\medskip\vspace{0.02in}

\noindent To calculate $E\left( \left[ \widehat{f}\left( x\right) \right]
^{r}\right) $ and $E\left( \left[ \widehat{F}\left( x\right) \right]
^{r}\right) ,$ we need to.find the PDF of random variable $\widehat{\alpha}$
Let $Z_{i}=-\ln\left( 2X_{i}-X_{i}^{2}\right) $, $i=1,\ldots,n$ and $T=\dsum
\limits_{i=1}^{n}Z_{i}.$ Then, the distribution of $Z_{1}$ is exponential
with PDF given by%
\begin{equation*}
f_{Z_{1}}\left( z\right) =\alpha e^{-\alpha z},\text{ for }z>0\text{,}
\end{equation*}
and then $T$ is a gamma random variable with the PDF given by%
\begin{equation*}
f_{T}\left( t\right) =\frac{\alpha^{n}}{\Gamma\left( n\right) }%
t^{n-1}e^{-\alpha t},\text{ for }t>0\text{.}
\end{equation*}
Therefore, using some elementary algebra, the PDF of $\widehat{\alpha}=S=n/T$
is%
\begin{equation}
f_{S}\left( s\right) =\dfrac{n^{n}\alpha^{n}e^{-\frac{n\alpha}{s}}}{%
\Gamma\left( n\right) s^{n+1}},\text{ for }s>0\text{.}
\end{equation}
\medskip\medskip In the following Theorem, we give $E\left( \left[ \widehat{f%
}\left( x\right) \right] ^{r}\right) $ and $E\left( \left[ \widehat{F}\left(
x\right) \right] ^{r}\right) .$\vspace{0.02in}

\begin{theorem}
We have%
\begin{equation*}
E\left( \left[ \widehat{f}\left( x\right) \right] ^{r}\right) =\frac{2\left(
n\alpha\right) ^{\frac{r+n}{2}}\left( 2-2x\right) ^{r}}{\Gamma\left(
n\right) \left( 2x-x^{2}\right) ^{r}}\left( \frac{-1}{r\ln\left(
2x-x^{2}\right) }\right) ^{\frac{r-n}{2}}K_{r-n}\left( 2\sqrt{-n\alpha
r\ln\left( 2x-x^{2}\right) }\right)
\end{equation*}
\medskip and%
\begin{equation*}
E\left( \left[ \widehat{F}\left( x\right) \right] ^{r}\right) =\frac{2\left(
n\alpha\right) ^{\frac{n}{2}}}{\Gamma\left( n\right) }\left( \frac{-1}{%
r\ln\left( 2x-x^{2}\right) }\right) ^{-\frac{n}{2}}K_{-n}\left( 2\sqrt{%
-n\alpha r\ln\left( 2x-x^{2}\right) }\right) ,
\end{equation*}
where $K_{v}\left( \cdot\right) $ is the modified Bessel function of the
second kind of order $v$.\medskip\smallskip
\end{theorem}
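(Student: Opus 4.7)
The plan is to express both $[\widehat{f}(x)]^{r}$ and $[\widehat{F}(x)]^{r}$ as simple functions of the random variable $S=\widehat{\alpha}$ whose density is given in (3), and then evaluate the resulting expectations by recognising a standard integral representation of the modified Bessel function of the second kind, namely
\begin{equation*}
\int_{0}^{\infty} s^{\nu-1}\exp\!\left(-\beta s-\frac{\gamma}{s}\right)ds \;=\; 2\left(\frac{\gamma}{\beta}\right)^{\nu/2} K_{\nu}\!\left(2\sqrt{\beta\gamma}\right),
\end{equation*}
valid for $\beta,\gamma>0$ and any $\nu\in\mathbb{R}$.

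First I would set $a:=-\ln(2x-x^{2})$, which is strictly positive for $0<x<1$ since $0<2x-x^{2}<1$. This lets me rewrite the random powers as plain exponentials of $\widehat{\alpha}$: namely $(2x-x^{2})^{r\widehat{\alpha}}=e^{-ra\widehat{\alpha}}$, so that
\begin{equation*}
[\widehat{F}(x)]^{r}=e^{-raS},\qquad [\widehat{f}(x)]^{r}=\frac{(2-2x)^{r}}{(2x-x^{2})^{r}}\,S^{r}\,e^{-raS}.
\end{equation*}

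Next I would insert the density (3) and compute, for the CDF,
\begin{equation*}
E\!\left([\widehat{F}(x)]^{r}\right)=\frac{n^{n}\alpha^{n}}{\Gamma(n)}\int_{0}^{\infty} s^{-n-1}\exp\!\left(-ras-\frac{n\alpha}{s}\right)ds,
\end{equation*}
which fits the Bessel integral above with $\nu=-n$, $\beta=ra$, $\gamma=n\alpha$. Analogously, for the PDF the constant factor $(2-2x)^{r}/(2x-x^{2})^{r}$ comes outside and the remaining integral is
\begin{equation*}
\frac{n^{n}\alpha^{n}}{\Gamma(n)}\int_{0}^{\infty} s^{r-n-1}\exp\!\left(-ras-\frac{n\alpha}{s}\right)ds,
\end{equation*}
which is the same integral with $\nu=r-n$.

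Finally I would apply the Bessel formula, collect the constants $n^{n}\alpha^{n}(n\alpha)^{(\nu)/2}$ into a single power $(n\alpha)^{(r+n)/2}$ (respectively $(n\alpha)^{n/2}$), and convert the factor $(1/(ra))^{\nu/2}$ back to $\bigl(-1/(r\ln(2x-x^{2}))\bigr)^{\nu/2}$ via the identity $ra=-r\ln(2x-x^{2})$; the Bessel argument $2\sqrt{ra\cdot n\alpha}$ becomes $2\sqrt{-n\alpha r\ln(2x-x^{2})}$. Matching exponents then yields the two displayed expressions.

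The only genuine obstacle is the careful bookkeeping of exponents and signs when translating between $a=-\ln(2x-x^{2})$ and the quantities that appear in the statement (in particular, verifying that $(\gamma/\beta)^{\nu/2}$ with $\gamma=n\alpha$ combines cleanly with $n^{n}\alpha^{n}$ to produce the stated power $(n\alpha)^{(r+n)/2}$ or $(n\alpha)^{n/2}$); the rest is a direct substitution. One should also note in passing that convergence of the integrals is automatic because $a>0$, so both $e^{-ras}$ and $e^{-n\alpha/s}$ control the tails at $\infty$ and at $0$ respectively.
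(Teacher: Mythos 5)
Your proposal is correct and follows essentially the same route as the paper: insert the density (3) of $\widehat{\alpha}$, reduce to the integral $\int_{0}^{\infty}s^{\nu-1}e^{-\beta s-\gamma/s}\,ds$, and evaluate it via the standard Bessel representation (Gradshteyn--Ryzhik 3.471.9) with $\nu=r-n$ and $\nu=-n$ respectively. Your substitution $a=-\ln(2x-x^{2})>0$ and the remark on convergence are minor (and welcome) additions, but the argument is the same.
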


\begin{proof}
From Equation (3), we can write%
\begin{align*}
E\left( \left[ \widehat{f}\left( x\right) \right] ^{r}\right) &
=\int_{0}^{\infty}\left[ s\left( 2-2x\right) \left( 2x-x^{2}\right) ^{s-1}%
\right] ^{r}\frac{n^{n}\alpha^{n}e^{-\frac{n\alpha}{s}}}{\Gamma\left(
n\right) s^{n+1}}ds \\
& =\frac{n^{n}\alpha^{n}\left( 2-2x\right) ^{r}}{\Gamma\left( n\right)
\left( 2x-x^{2}\right) ^{r}}\int_{0}^{\infty}s^{r-n-1}\left( 2x-x^{2}\right)
^{rs}e^{-\frac{n\alpha}{s}}ds \\
& =\frac{n^{n}\alpha^{n}\left( 2-2x\right) ^{r}}{\Gamma\left( n\right)
\left( 2x-x^{2}\right) ^{r}}\int_{0}^{\infty}s^{r-n-1}e^{rs\ln\left(
2x-x^{2}\right) }e^{-\frac{n\alpha}{s}}ds.
\end{align*}
Using Formula (3.471.9) in Gradshteyn and Ryzhik (2000), we obtain%
\begin{equation*}
\int_{0}^{\infty}s^{r-n-1}e^{rs\ln\left( 2x-x^{2}\right) }e^{-\frac{n\alpha 
}{s}}ds=2\left( \frac{-n\alpha}{r\ln\left( 2x-x^{2}\right) }\right) ^{\frac{%
r-n}{2}}K_{r-n}\left( 2\sqrt{-n\alpha r\ln\left( 2x-x^{2}\right) }\right) .
\end{equation*}
\medskip\medskip Therefore%
\begin{equation*}
E\left( \left[ \widehat{f}\left( x\right) \right] ^{r}\right) =\frac{2\left(
n\alpha\right) ^{\frac{r+n}{2}}\left( 2-2x\right) ^{r}}{\Gamma\left(
n\right) \left( 2x-x^{2}\right) ^{r}}\left( \frac{-1}{r\ln\left(
2x-x^{2}\right) }\right) ^{\frac{r-n}{2}}K_{r-n}\left( 2\sqrt{-n\alpha
r\ln\left( 2x-x^{2}\right) }\right) .
\end{equation*}
In a similar manner, one can obtain $E\left( \left[ \widehat{F}\left(
x\right) \right] ^{r}\right) \ $and then the proof of Theorem 2.1 is
completed.\medskip
\end{proof}

\begin{remark}
From Theorem 2.1, we observe$\ $(when $r=1$) that the estimators $\widehat{f}%
\left( x\right) $ and $\widehat{F}\left( x\right) $ are biased for $f(x)$
and $F(x)$, respectively.\bigskip\medskip
\end{remark}

\noindent The MSEs of $\widehat{f}\left( x\right) $ and $\widehat{F}\left(
x\right) $ are given in the following Theorem.\smallskip

\begin{theorem}
The MSEs of $\widehat{f}\left( x\right) $ and $\widehat{F}\left( x\right) $
are respectively given by%
\begin{align*}
MSE\left( \widehat{f}\left( x\right) \right) & =\frac{2\left( n\alpha\right)
^{\frac{2+n}{2}}\left( 2-2x\right) ^{2}}{\Gamma\left( n\right) \left(
2x-x^{2}\right) ^{2}}\left( \frac{-1}{2\ln\left( 2x-x^{2}\right) }\right) ^{%
\frac{2-n}{2}} \\[0.05in]
& \times K_{2-n}\left( 2\sqrt{-2n\alpha\ln\left( 2x-x^{2}\right) }\right) -%
\frac{8\left( n\alpha\right) ^{\frac{1+n}{2}}\left( 1-x\right) f\left(
x\right) }{\Gamma\left( n\right) \left( 2x-x^{2}\right) } \\
& \times\left( \frac{-1}{\ln\left( 2x-x^{2}\right) }\right) ^{\frac {1-n}{2}%
}K_{1-n}\left( 2\sqrt{-n\alpha\ln\left( 2x-x^{2}\right) }\right)
+f^{2}\left( x\right) ,
\end{align*}
and%
\begin{align*}
MSE\left( \widehat{F}\left( x\right) \right) & =\frac{2\left( n\alpha\right)
^{\frac{n}{2}}}{\Gamma\left( n\right) }\left( -\frac {1}{2\ln\left(
2x-x^{2}\right) }\right) ^{-\frac{n}{2}}K_{-n}\left( 2\sqrt{%
-2n\alpha\ln\left( 2x-x^{2}\right) }\right) \\[0.06in]
& -\frac{4\left( n\alpha\right) ^{n/2}F\left( x\right) }{\Gamma\left(
n\right) }\left( -\frac{1}{\ln\left( 2x-x^{2}\right) }\right) ^{-n/2} \\
& \times K_{-n}\left( 2\sqrt{-n\alpha\ln\left( 2x-x^{2}\right) }\right)
+F^{2}\left( x\right) .
\end{align*}
\medskip
\end{theorem}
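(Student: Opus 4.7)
The plan is to derive both MSE formulas from the bias--variance style expansion
\[
MSE\bigl(\widehat{\theta}(x)\bigr)=E\!\left[\widehat{\theta}(x)^{2}\right]-2\,\theta(x)\,E\!\left[\widehat{\theta}(x)\right]+\theta^{2}(x),
\]
applied with $\widehat{\theta}(x)=\widehat{f}(x)$ and $\widehat{\theta}(x)=\widehat{F}(x)$. Since Theorem~2.1 already gives $E([\widehat{f}(x)]^{r})$ and $E([\widehat{F}(x)]^{r})$ in closed form for every integer $r$, the proof reduces to evaluating each of these formulas at $r=2$ and at $r=1$, and then assembling the three pieces.

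First I would substitute $r=2$ into the expression for $E([\widehat{f}(x)]^{r})$ to obtain the leading term of $MSE(\widehat{f}(x))$; this produces the factor $(n\alpha)^{(2+n)/2}$, the argument $2\sqrt{-2n\alpha \ln(2x-x^{2})}$ inside the Bessel function $K_{2-n}$, and the power $\bigl(-1/(2\ln(2x-x^{2}))\bigr)^{(2-n)/2}$. Next I would substitute $r=1$ to get $E[\widehat{f}(x)]$, multiply by $-2f(x)$, and simplify the constant by writing $2(2-2x)=4\cdot 2(1-x)/1$ so that the combined coefficient becomes $-8(n\alpha)^{(1+n)/2}(1-x)f(x)/(\Gamma(n)(2x-x^{2}))$, matching the stated middle term. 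Adding $f^{2}(x)$ as the constant term finishes the first formula.

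The argument for $\widehat{F}(x)$ is entirely parallel and, if anything, cleaner because the prefactor in Theorem~2.1 for $E([\widehat{F}(x)]^{r})$ does not depend on $r$ outside the Bessel function and the power of the logarithm. Taking $r=2$ yields the first term of $MSE(\widehat{F}(x))$ with $K_{-n}\bigl(2\sqrt{-2n\alpha\ln(2x-x^{2})}\bigr)$, taking $r=1$ and multiplying by $-2F(x)$ yields the middle term with $K_{-n}\bigl(2\sqrt{-n\alpha\ln(2x-x^{2})}\bigr)$, and $F^{2}(x)$ closes the expression.

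There is no real analytic obstacle here: the hard work was done inside Theorem~2.1 via the integral identity (Gradshteyn--Ryzhik 3.471.9). The only thing to be careful about is bookkeeping of constants and exponents -- in particular, tracking how the factor $r$ inside $\bigl(-1/(r\ln(2x-x^{2}))\bigr)^{(r-n)/2}$ and inside $\sqrt{-n\alpha r\ln(2x-x^{2})}$ specializes when $r=2$ versus $r=1$, and absorbing the factor $2-2x=2(1-x)$ when combining $f(x)$ with $E[\widehat{f}(x)]$. Once those substitutions are made term by term, the two stated identities drop out directly.
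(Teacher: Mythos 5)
Your proposal is correct and follows exactly the paper's own route: expand $MSE=E[\widehat{\theta}^{2}]-2\theta E[\widehat{\theta}]+\theta^{2}$ and substitute $r=2$ and $r=1$ into Theorem 2.1, with the factor $2-2x=2(1-x)$ absorbed to give the coefficient $-8$. The bookkeeping you describe checks out, so nothing further is needed.
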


\begin{proof}
We have%
\begin{equation*}
MSE\left( \widehat{f}\left( x\right) \right) =E\left( \left[ \widehat{f}%
\left( x\right) \right] ^{2}\right) -2f\left( x\right) E\left( \widehat{f}%
\left( x\right) \right) +f^{2}\left( x\right) ,
\end{equation*}
then, by setting $r=1$ and $r=2$ in Theorem 2.1, we obtain $MSE\left( 
\widehat{f}\left( x\right) \right) $. In a similar manner, we can find $%
MSE\left( \widehat{F}\left( x\right) \right) $.
\end{proof}

\section{\textbf{UMVUE of the PDF and the CDF}}

\noindent In this section, the UMVUEs of the PDF and the CDF of the
Topp-Leone distribution are derived. The MSEs of these estimators are also
obtained.\smallskip

\noindent Suppose $X_{1},\ldots,X_{n}$ is a random sample from the
Topp-Leone distribution. Then $T=-\dsum \limits_{i=1}^{n}\ln\left(
2X_{i}-X_{i}^{2}\right) $ is a complete sufficient statistic for $\alpha.$
The UMVUE of $f(x)$,\ denoted by $f^{\ast}\left( t\right) ,$ is given by
Lehmann-Scheff\'{e} theorem%
\begin{equation*}
E\left( f^{\ast}\left( T\right) \right) =\int
f_{X_{1}|T}(x_{1}|t)f_{T}\left( t\right) dt=\int
f_{X_{1},T}(x_{1},t)dt=f_{X_{1}}(x_{1}),
\end{equation*}
where $f_{X_{1}|T}(x_{1}|t)=f^{\ast}(t)$ is the conditional PDF of $X_{1}$
given $T$ and $f_{X_{1},T}(x_{1},t)$ is the joint PDF of $X_{1}$ and $T$.
Then, we need the following Lemma to find $f_{X_{1}|T}(x_{1}|t)$.

\begin{lemma}
The conditional PDF of $X_{1}$ given $T$ is%
\begin{equation*}
f_{X_{1}|T}(x|t)=\frac{\left( n-1\right) \left( 2-2x\right) \left(
t+\ln\left( 2x-x^{2}\right) \right) ^{n-2}}{\left( 2x-x^{2}\right) t^{n-1}},%
\text{\ }-\ln\left( 2x-x^{2}\right) <t<\infty
\end{equation*}
\end{lemma}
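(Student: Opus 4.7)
The plan is to compute $f_{X_1|T}(x|t)$ as the ratio $f_{X_1,T}(x,t)/f_T(t)$, exploiting the independence structure already set up in the paper via the transformation $Z_i=-\ln(2X_i-X_i^2)$.

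First I would note that $X_1$ is in one-to-one correspondence with $Z_1$ (on the relevant half of the unit interval), and that $Z_1,\ldots,Z_n$ are i.i.d.\ exponential with rate $\alpha$. Writing $T=Z_1+W$ with $W=\sum_{i=2}^{n}Z_i$, the variable $W$ is independent of $Z_1$ (hence of $X_1$) and has a Gamma$(n-1,\alpha)$ density. Therefore the joint density of $(X_1,T)$ can be obtained from the factorization
\begin{equation*}
f_{X_1,T}(x,t)=f_{X_1}(x)\,f_{W}\bigl(t-z_1(x)\bigr),\qquad z_1(x)=-\ln(2x-x^2),
\end{equation*}
valid on the region $t>z_1(x)$, i.e.\ $-\ln(2x-x^2)<t$. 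Substituting the Topp--Leone density (1) and the Gamma$(n-1,\alpha)$ density yields
\begin{equation*}
f_{X_1,T}(x,t)=\alpha(2-2x)(2x-x^2)^{\alpha-1}\cdot\frac{\alpha^{n-1}}{\Gamma(n-1)}\bigl(t+\ln(2x-x^2)\bigr)^{n-2}e^{-\alpha(t+\ln(2x-x^2))}.
\end{equation*}

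Next I would simplify, using $e^{-\alpha\ln(2x-x^2)}=(2x-x^2)^{-\alpha}$, so that the factor $(2x-x^2)^{\alpha-1}$ combines with $(2x-x^2)^{-\alpha}$ to produce a single $(2x-x^2)^{-1}$. This gives
\begin{equation*}
f_{X_1,T}(x,t)=\frac{\alpha^n(2-2x)}{\Gamma(n-1)(2x-x^2)}\bigl(t+\ln(2x-x^2)\bigr)^{n-2}e^{-\alpha t}.
\end{equation*}
Dividing by the Gamma$(n,\alpha)$ marginal $f_T(t)=\alpha^n t^{n-1}e^{-\alpha t}/\Gamma(n)$ causes $\alpha^n$ and $e^{-\alpha t}$ to cancel, and the Gamma-function ratio $\Gamma(n)/\Gamma(n-1)=n-1$ produces precisely the stated expression on the support $-\ln(2x-x^2)<t<\infty$.

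The main (and only) subtlety is the bookkeeping: one has to track the constraint $t>-\ln(2x-x^2)$ that defines the domain of the Gamma$(n-1)$ convolution factor, and verify that all $\alpha$-dependence cancels out (which it must, since the resulting conditional density should be free of the parameter because $T$ is sufficient). No delicate estimates or integrations are required; the argument is essentially a density transformation plus the independence of $Z_1$ and $W$.
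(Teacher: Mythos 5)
Your proposal is correct and follows essentially the same route as the paper: both rest on the decomposition $T=Z_{1}+W$ with $W=\sum_{i=2}^{n}Z_{i}\sim\mathrm{Gamma}(n-1,\alpha)$ independent of $Z_{1}$, the gamma marginal of $T$, and the change of variables between $X_{1}$ and $Z_{1}=-\ln(2X_{1}-X_{1}^{2})$; you merely perform the transformation to $X_{1}$ before conditioning, whereas the paper first writes down $h_{Z_{1}|T}(z|t)=(n-1)(t-z)^{n-2}/t^{n-1}$ and transforms afterward. Your write-up in fact supplies the ``elementary algebra'' (the cancellation of all $\alpha$-dependence and the ratio $\Gamma(n)/\Gamma(n-1)=n-1$) that the paper leaves implicit.
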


\begin{proof}
We have $T=\dsum \limits_{i=1}^{n}Z_{i}$ is a gamma random variable with the
PDF given by (3). Therefore, using some elementary algebra, the conditional
PDF of $Z_{1}$ given $T$ is\textbf{\ }%
\begin{equation*}
h_{Z_{1}|T}(z|t)=\frac{\left( n-1\right) \left( t-z\right) ^{n-2}}{t^{n-1}},%
\text{ }0<z<t.
\end{equation*}
Then%
\begin{equation*}
f_{X_{1}|T}(x_{1}|t)=\frac{2-2x}{2x-x^{2}}h_{Z_{1}|T}(-\ln\left(
2x-x^{2}\right) |t),\text{ }-\ln\left( 2x-x^{2}\right) <t<\infty,
\end{equation*}
and the proof is completed.\bigskip
\end{proof}

\begin{theorem}
Let $T=t$ be given. Then, the UMVUEs of $f\left( x\right) $ and $F\left(
x\right) $ are respectively given, for $-\ln\left( 2x-x^{2}\right)
<t<\infty, $ by%
\begin{equation*}
\widetilde{f}\left( x\right) =\frac{\left( n-1\right) \left( 2-2x\right)
\left( t+\ln\left( 2x-x^{2}\right) \right) ^{n-2}}{\left( 2x-x^{2}\right)
t^{n-1}}
\end{equation*}
and%
\begin{equation*}
\widetilde{F}\left( x\right) =\left( \frac{t+\ln\left( 2x-x^{2}\right) }{t}%
\right) ^{n-1}.
\end{equation*}
\end{theorem}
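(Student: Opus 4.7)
The plan is to apply the Lehmann--Scheff\'e theorem using the complete sufficient statistic $T=-\sum_{i=1}^{n}\ln(2X_i-X_i^2)$, in two parallel Rao--Blackwellisations, one for $\widetilde{f}(x)$ and one for $\widetilde{F}(x)$.

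For $\widetilde{f}(x)$, there is essentially nothing left to do after Lemma~3.1: the identity displayed just before that lemma exhibits $f_{X_1\mid T}(x\mid T)$ as an unbiased estimator of $f(x)$, and since it depends on the sample only through the complete sufficient statistic $T$, Lehmann--Scheff\'e forces it to be the UMVUE. Reading the explicit form of the conditional density off Lemma~3.1 gives the first formula verbatim.

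For $\widetilde{F}(x)$, my starting point would be the naive unbiased estimator $\mathbf{1}_{\{X_1\le x\}}$, whose expectation equals $F(x)$. Rao--Blackwellising with respect to $T$ then identifies the UMVUE as the conditional probability $P(X_1\le x\mid T=t)$. To compute it I would switch to the transformed variable $Z_1=-\ln(2X_1-X_1^2)$: since $u\mapsto -\ln(2u-u^2)$ is strictly decreasing on $(0,1)$, the event $\{X_1\le x\}$ coincides with $\{Z_1\ge z_0\}$ where $z_0=-\ln(2x-x^2)$. The conditional density of $Z_1$ given $T=t$ was already obtained in the proof of Lemma~3.1 as $(n-1)(t-z)^{n-2}/t^{n-1}$ on $(0,t)$, so the UMVUE reduces to the single integral
\[
\widetilde{F}(x)=\int_{z_0}^{t}\frac{(n-1)(t-z)^{n-2}}{t^{n-1}}\,dz,
\]
which, under the substitution $w=t-z$, collapses to $\bigl((t-z_0)/t\bigr)^{n-1}$, matching the claimed expression.

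The calculations are routine; the only step deserving attention is the support constraint $z_0<t$, which is precisely the hypothesis $-\ln(2x-x^2)<t<\infty$ in the theorem and is what keeps the integral nontrivial. In the complementary regime $t\le z_0$, the conditional support of $X_1$ given $T=t$ sits entirely above $x$, so the UMVUE vanishes there; the displayed formula therefore captures the only interesting case, and I do not foresee any further obstacle beyond bookkeeping this domain carefully.
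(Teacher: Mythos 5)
Your proposal is correct and follows essentially the same route as the paper: Lemma 3.1 plus Lehmann--Scheff\'e gives $\widetilde{f}(x)$ directly, and $\widetilde{F}(x)$ is obtained by integrating the conditional density (your integral over $z\in(z_0,t)$ is exactly the paper's ``integrate $\widetilde{f}$'' step after the change of variables $z=-\ln(2u-u^2)$). Your version is slightly more careful, since you justify that the result is the UMVUE of $F(x)$ by Rao--Blackwellising the indicator $\mathbf{1}_{\{X_1\le x\}}$ and you track the support constraint $z_0<t$, both of which the paper leaves implicit.
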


\bigskip

\begin{proof}
From Lemma 3.1, we see immediately that $\widetilde{f}\left( x\right) $ is
the UMVUE of $f\left( x\right) $. Also, we obtain $\widetilde{F}\left(
x\right) $ by integrating $\widetilde{f}\left( x\right) .$\bigskip
\end{proof}

\noindent The MSEs of $\widetilde{f}\left( x\right) $ and $\widetilde{F}%
\left( x\right) $ are given in the following Theorem.\medskip

\begin{theorem}
The MSEs of $f\left( x\right) $ and $F\left( x\right) $ are respectively
given by%
\begin{equation*}
MSE\left( \widetilde{f}\left( x\right) \right) =\frac{A^{2}}{\Gamma\left(
n\right) }\sum_{i=1}^{2n-4}\binom{2n-4}{i}b^{i}\alpha^{i+3}\Gamma\left(
n-i-2,-\alpha b\right) -f^{2}\left( x\right)
\end{equation*}
and%
\begin{equation*}
MSE\left( \widetilde{F}\left( x\right) \right) =\frac{1}{\Gamma\left(
n\right) }\sum_{i=1}^{2n-2}\binom{2n-2}{i}b^{i}\alpha^{i+1}\Gamma\left(
n-i,-\alpha b\right) -F^{2}\left( x\right) ,
\end{equation*}
where $A=\frac{\left( n-1\right) \left( 2-2x\right) }{2x-x^{2}},$ $%
b=\ln\left( 2x-x^{2}\right) $ and $\Gamma\left( s,x\right)
=\int_{x}^{\infty}t^{s-1}e^{-t}dt$ is the complementary incomplete gamma
function.\medskip
\end{theorem}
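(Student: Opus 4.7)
The starting point is that, by Theorem~3.2, both $\widetilde{f}(x)$ and $\widetilde{F}(x)$ are unbiased (being UMVUEs built from the complete sufficient statistic $T$ via Lehmann--Scheff\'e). Hence
\begin{equation*}
MSE\bigl(\widetilde{f}(x)\bigr)=E\!\left[\widetilde{f}(x)^{2}\right]-f^{2}(x),\qquad
MSE\bigl(\widetilde{F}(x)\bigr)=E\!\left[\widetilde{F}(x)^{2}\right]-F^{2}(x),
\end{equation*}
so the entire task reduces to computing the two second moments against the gamma density $f_T(t)=\alpha^{n}t^{n-1}e^{-\alpha t}/\Gamma(n)$. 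I set $A=(n-1)(2-2x)/(2x-x^{2})$ and $b=\ln(2x-x^{2})<0$ up front so that $\widetilde{f}(x)=A(t+b)^{n-2}/t^{n-1}$ and $\widetilde{F}(x)=((t+b)/t)^{n-1}$, valid on $t>-b$ (outside this range the estimator is defined to vanish, consistently with the support of the conditional density in Lemma~3.1).

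Next I would plug into
\begin{equation*}
E\!\left[\widetilde{f}(x)^{2}\right]=\frac{A^{2}\alpha^{n}}{\Gamma(n)}\int_{-b}^{\infty}\frac{(t+b)^{2n-4}}{t^{n-1}}e^{-\alpha t}\,dt,
\end{equation*}
expand $(t+b)^{2n-4}=\sum_{i}\binom{2n-4}{i}t^{2n-4-i}b^{i}$ by the binomial theorem, and swap sum with integral. Each remaining integral has the form $\int_{-b}^{\infty}t^{n-3-i}e^{-\alpha t}dt$, which after the substitution $u=\alpha t$ collapses to $\alpha^{-(n-2-i)}\Gamma(n-i-2,-\alpha b)$. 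Collecting factors of $\alpha$ yields a sum of the exact shape $\sum\binom{2n-4}{i}b^{i}\alpha^{i+?}\Gamma(n-i-2,-\alpha b)$ claimed in the theorem, and subtracting $f^{2}(x)$ gives the stated $MSE(\widetilde{f}(x))$. The calculation for $\widetilde{F}(x)$ is strictly parallel: write
\begin{equation*}
E\!\left[\widetilde{F}(x)^{2}\right]=\frac{\alpha^{n}}{\Gamma(n)}\int_{-b}^{\infty}\frac{(t+b)^{2n-2}}{t^{n-1}}e^{-\alpha t}\,dt,
\end{equation*}
expand $(t+b)^{2n-2}$ binomially, integrate term-by-term, and recognize each piece as $\Gamma(n-i,-\alpha b)$ after the $u=\alpha t$ substitution.

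The genuinely routine part is the algebra; the only real obstacle is bookkeeping. Specifically: (i) getting the signs right when $b<0$ so that $-\alpha b>0$ is the correct lower limit in the incomplete gamma function; (ii) correctly tracking the accumulated powers of $\alpha$ after the substitution to match the exponents $i+3$ and $i+1$ in the statement; and (iii) deciding how to handle the $i=0$ term of the binomial expansion, which is absorbed into the $-f^{2}(x)$ / $-F^{2}(x)$ contributions via the unbiasedness identity (this is presumably why the stated sums begin at $i=1$ rather than $i=0$). Once these three accounting points are settled, both formulas drop out by identical routines.
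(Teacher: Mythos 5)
Your overall route is exactly the paper's: unbiasedness reduces each MSE to a second moment, the second moment is the integral of $(t+b)^{2n-4}/t^{n-1}$ (resp.\ $(t+b)^{2n-2}/t^{n-1}$) against the gamma density of $T$, and a binomial expansion followed by the substitution $u=\alpha t$ turns each term into a complementary incomplete gamma function. (The paper expands $\left(1+b/t\right)^{2n-4}$ rather than $(t+b)^{2n-4}$, which is the same computation after factoring out $t^{2n-4}$.)

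The one place your plan genuinely fails is item (iii). The $i=0$ term of the binomial expansion evaluates to $\frac{A^{2}\alpha^{2}}{\Gamma(n)}\Gamma\left(n-2,-\alpha b\right)$, whereas $f^{2}(x)=\frac{A^{2}\alpha^{2}}{(n-1)^{2}}e^{2\alpha b}$; these are not equal (compare them as $\alpha\to 0$), so the constant term is \emph{not} absorbed by the $-f^{2}(x)$ correction, and unbiasedness plays no role in removing it. In the paper the sum starts at $i=1$ only because the binomial expansion is written there with the index beginning at $1$, i.e.\ the $i=0$ term is silently dropped; a correct execution of your plan yields a sum running from $i=0$ to $2n-4$ and therefore does not reproduce the displayed formula by the mechanism you describe. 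Relatedly, your own (correct) substitution gives $\int_{-b}^{\infty}t^{\,n-3-i}e^{-\alpha t}\,dt=\alpha^{\,i+2-n}\Gamma\left(n-i-2,-\alpha b\right)$, so after multiplying by $\alpha^{n}$ the accumulated power of $\alpha$ is $i+2$, not the $i+3$ in the statement (the paper's intermediate step omits the factor $1/\alpha$ coming from $dt=du/\alpha$); the analogous discrepancy occurs for $\widetilde{F}(x)$. In short, the computational scheme is the right one and matches the paper's, but the reconciliation you propose for the $i=0$ term does not work, and carrying the plan out carefully produces a formula differing from the stated one in both the starting index and the power of $\alpha$.
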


\begin{proof}
We have%
\begin{equation*}
\widetilde{f}\left( x\right) =\frac{A\left( t+b\right) ^{n-2}}{t^{n-1}},
\end{equation*}
and%
\begin{equation*}
MSE\left( \widetilde{f}\left( x\right) \right) =E\left( \left[ \widetilde{f}%
\left( x\right) \right] ^{2}\right) -f^{2}\left( x\right) ,
\end{equation*}
where, from Equation (3),%
\begin{align*}
E\left( \left[ \widetilde{f}\left( x\right) \right] ^{2}\right) & =\frac{%
\alpha^{n}A^{2}}{\Gamma\left( n\right) }\int_{-b}^{\infty}\left[ \frac{%
\left( t+b\right) ^{n-2}}{t^{n-1}}\right] ^{2}t^{n-1}e^{-\alpha t}dt \\
& =\frac{\alpha^{n}A^{2}}{\Gamma\left( n\right) }\int_{-b}^{\infty}\frac{%
\left( t+b\right) ^{2n-4}}{t^{2n-2}}t^{n-1}e^{-\alpha t}dt \\
& =\frac{\alpha^{n}A^{2}}{\Gamma\left( n\right) }\int_{-b}^{\infty}\frac{%
\left( t+b\right) ^{2n-4}}{t^{2n-4}}t^{n-3}e^{-\alpha t}dt \\
& =\frac{\alpha^{n}A^{2}}{\Gamma\left( n\right) }\int_{-b}^{\infty}\left( 
\frac{t+b}{t}\right) ^{2n-4}t^{n-3}e^{-\alpha t}dt \\
& =\frac{\alpha^{n}A^{2}}{\Gamma\left( n\right) }\int_{-b}^{\infty}\left( 1+%
\frac{b}{t}\right) ^{2n-4}t^{n-3}e^{-\alpha t}dt.
\end{align*}
Using%
\begin{equation*}
\left( 1+\frac{b}{t}\right) ^{2n-4}=\sum_{i=1}^{2n-4}\binom{2n-4}{i}\left( 
\frac{b}{t}\right) ^{i},
\end{equation*}
we obtain%
\begin{align*}
E\left( \left[ \widetilde{f}\left( x\right) \right] ^{2}\right) & =\frac{%
\alpha^{n}A^{2}}{\Gamma\left( n\right) }\int_{-b}^{\infty}\left[
\sum_{i=1}^{2n-4}\binom{2n-4}{i}\left( \frac{b}{t}\right) ^{i}\right]
t^{n-3}e^{-\alpha t}dt \\
& =\frac{\alpha^{n}A^{2}}{\Gamma\left( n\right) }\sum_{i=1}^{2n-4}\binom{2n-4%
}{i}b^{i}\int_{-b}^{\infty}t^{n-i-3}e^{-\alpha t}dt.
\end{align*}
The change of variables $u=\alpha t,$\ yields%
\begin{align*}
E\left( \left[ \widetilde{f}\left( x\right) \right] ^{2}\right) & =\frac{%
\alpha^{n}A^{2}}{\Gamma\left( n\right) }\sum_{i=1}^{2n-4}\binom {2n-4}{i}%
\frac{b^{i}}{\alpha^{n-i-3}}\int_{-\alpha b}^{\infty}u^{n-i-3}e^{-u}du \\
& =\frac{A^{2}}{\Gamma\left( n\right) }\sum_{i=1}^{2n-4}\binom{2n-4}{i}%
b^{i}\alpha^{i+3}\Gamma\left( n-i-2,-\alpha b\right) .
\end{align*}
In a similar manner, we can find $MSE\left( \widetilde{F}\left( x\right)
\right) $.
\end{proof}

\section{\textbf{PCE of the PDF and the CDF}}

\noindent The PCE was first introduced by Kao (1958, 1959)\ and used it when
some probability distribution has the quartile function in a closed-form
expression. In this section, the PCEs of the PDF and the CDF of the
Topp-Leone distribution are derived. Let $X_{1},\ldots,X_{n}$ be a random
sample of size $n$ from the Topp-Leone distribution in (2), and let $%
X_{(1)},\ldots,X_{(n)}$ be the corresponding order statistics. Then the PCE
of $\alpha$, denoted by $\widetilde{\alpha}_{PCE},$ is obtained by minimizing%
\begin{equation*}
\sum_{j=1}^{n}\left( p_{i}^{1/\alpha}-\left( 2x_{(i)}-x_{(i)}^{2}\right)
\right) ^{2},
\end{equation*}
where $p_{i}=\frac{i}{n+1}$. So, the PCEs of the PDF and the CDF are
respectively given, for $0<x<1,$ by%
\begin{equation*}
\widetilde{f}_{PCE}\left( x\right) =\widetilde{\alpha}_{PCE}\left(
2-2x\right) \left( 2x-x^{2}\right) ^{\widetilde{\alpha}_{PCE}-1},
\end{equation*}
and%
\begin{equation*}
\widetilde{F}_{PCE}\left( x\right) =\left( 2x-x^{2}\right) ^{\widetilde{%
\alpha}_{PCE}}.
\end{equation*}
Since it's difficult to find the MSEs of these estimators analytically, we
shall calculate them by simulation.

\section{\textbf{LSE and WLSE of the PDF and the CDF}}

The LSE and WLSE were originally proposed by Swain et al. (1988) to estimate
the parameters of Beta distributions. In this section, we use the same
methods for the Topp-Leone distribution. Let $X_{(1)},\ldots,X_{(n)}$ be the
order statistics of a size $n$ random sample from a distribution with CDF $%
G\left( \cdot\right) $. It is well known that%
\begin{equation*}
E\left( G(X_{(i)}\right) =\frac{i}{n+1}\text{ and\ }Var\left(
G(X_{(i)}\right) =\frac{i\left( n-i+1\right) }{\left( n+1\right) ^{2}\left(
n+2\right) }.
\end{equation*}
Using the expectations and the variances, two variants of the least squares
method follow.\bigskip

\subsection{\textit{LSE of the PDF and the CDF}}

The LSE of the unknown parameter is obtained by minimizing the function%
\begin{equation*}
\sum_{i=1}^{n}\left( F\left( X_{\left( i\right) }\right) -\frac{i}{n+1}%
\right) ^{2},
\end{equation*}
with respect to the unknown parameters. Therefore, for the Topp-Leone
distribution the LSE of $\alpha,$\textit{\ }denoted by $\widetilde{\alpha }%
_{LS},$is obtained by minimizing%
\begin{equation*}
\sum_{i=1}^{n}\left( \left( 2x_{\left( i\right) }-x_{\left( i\right)
}^{2}\right) ^{\alpha}-\frac{i}{n+1}\right) ^{2}.
\end{equation*}
Then, the LSEs of the PDF and the CDF are respectively given, for $0<x<1,$ by%
\begin{equation*}
\widetilde{f}_{LS}\left( x\right) =\widetilde{\alpha}_{LS}\left( 2-2x\right)
\left( 2x-x^{2}\right) ^{\widetilde{\alpha}_{LS}-1},
\end{equation*}
and%
\begin{equation*}
\widetilde{F}_{LS}\left( x\right) =\left( 2x-x^{2}\right) ^{\widetilde{\alpha%
}_{LS}}.
\end{equation*}
Since it's difficult to find the MSEs of these estimators analytically, we
shall calculate them by simulation.

\subsection{\textit{WLSE of the PDF and the CDF}}

The WLSE of the unknown parameter is obtained by minimizing%
\begin{equation*}
\sum_{i=1}^{n}w_{i}\left( F\left( X_{\left( i\right) }\right) -\frac{i}{n+1}%
\right) ^{2}
\end{equation*}%
with respect to the unknown parameters, where%
\begin{equation*}
w_{i}=\frac{1}{Var\left( F\left( X_{\left( i\right) }\right) \right) }=\frac{%
\left( n+2\right) \left( n+1\right) ^{2}}{i\left( n-i+1\right) }.
\end{equation*}%
Therefore, for the Topp-Leone distribution the WLSE of $\alpha ,$\textit{\ }%
denoted by $\widetilde{\alpha }_{WLS},$ is obtained by minimizing%
\begin{equation*}
\sum_{i=1}^{n}w_{i}\left( \left( 2x_{\left( i\right) }-x_{\left( i\right)
}^{2}\right) ^{\alpha }-\frac{i}{n+1}\right) ^{2}.
\end{equation*}%
Then, the WLS estimators of the PDF and the CDF are respectively given, for $%
0<x<1,$ by%
\begin{equation*}
\widetilde{f}_{WLS}\left( x\right) =\widetilde{\alpha }_{WLS}\left(
2-2x\right) \left( 2x-x^{2}\right) ^{\widetilde{\alpha }_{WLS}-1}
\end{equation*}%
and%
\begin{equation*}
\widetilde{F}_{WLS}\left( x\right) =\left( 2x-x^{2}\right) ^{\widetilde{%
\alpha }_{WLS}}.
\end{equation*}%
Since it's difficult to find the MSEs of these estimators analytically, we
shall calculate them by simulation.

\section{\textbf{Simulation study}}

In this section, by means of the statistical software \textbf{R}, a
simulation study is carried out to compare the different proposed
estimators: the MLE, the UMVUE, the PCE, the LSE and the WLSE of the PDF and
the CDF. This comparison is based on the MSEs. To this end, we generate 1000
independent replicates of sizes $n=10$, $20$, $50$ and $100$ from the
Topp-Leone distribution with $\alpha=0.5,$ $1.0$, $2.0$ and $3.0$. The
Topp-Leone random number generation is performed using the inversion method: 
$X=1-\sqrt {1-U^{1/\alpha}},$ where $U$ is a standard uniform random
variable. The results are summarized in Table 1. From this table, we observe
that the MLEs of the PDF and the CDF have the smallest MSEs. Then, the MLE
perform better than the others estimators in all the cases considered in
terms of MSEs. Also, we observe that the MSEs for each estimator decrease
with increasing sample size as expected.

\bigskip 
\begin{table}[h] \centering%
\caption{MSEs of the PDF estimators and the CDF estimators (MSEs of the CDF
estimators are given in brackets). }%
\begin{tabular}{|l|c|c|c|c|c|}
\hline
$n$ & Method & $\alpha =0.5$ & $\alpha =1.0$ & $\alpha =2.0$ & $\alpha =3.0$
\\ \hline
$10$ & \multicolumn{1}{|l|}{MLE} & 0.0702$\left[ 0.0892\right] $ & 0.0837$%
\left[ 0.0112\right] $ & 0.0238$\left[ 0.1021\right] $ & 0.0511$\left[ 0.0464%
\right] $ \\ 
& \multicolumn{1}{|l|}{UMVUE} & 0.0921$\left[ 0.1193\right] $ & 0.1021$\left[
0.0920\right] $ & 0.0436$\left[ 0.1355\right] $ & 0.0518$\left[ 0.0678\right]
$ \\ 
& \multicolumn{1}{|l|}{PCE} & 0.1296$\left[ 0.1247\right] $ & 0.1313$\left[
0.1047\right] $ & 0.0982$\left[ 0.1513\right] $ & 0.1099$\left[ 0.0897\right]
$ \\ 
& \multicolumn{1}{|l|}{LSE} & 0.1941$\left[ 0.2158\right] $ & 0.1967$\left[
0.2001\right] $ & 0.1869$\left[ 0.1901\right] $ & 0.2010$\left[ 0.1649\right]
$ \\ 
& \multicolumn{1}{|l|}{WLSE} & 0.1604$\left[ 0.1847\right] $ & 0.1741$\left[
0.1699\right] $ & 0.1107$\left[ 0.1318\right] $ & 0.1334$\left[ 0.1121\right]
$ \\ \hline
$20$ & \multicolumn{1}{|l|}{MLE} & 0.0314$\left[ 0.1701\right] $ & 0.0716$%
\left[ 0.0053\right] $ & 0.020$\left[ 0.0921\right] $ & 0.0341$\left[ 0.0344%
\right] $ \\ 
& \multicolumn{1}{|l|}{UMVUE} & 0.0422$\left[ 0.0985\right] $ & 0.0921$\left[
0.1192\right] $ & 0.0398$\left[ 0.1040\right] $ & 0.0403$\left[ 0.0516\right]
$ \\ 
& \multicolumn{1}{|l|}{PCE} & 0.0956$\left[ 0.1003\right] $ & 0.1296$\left[
0.1247\right] $ & 0.0772$\left[ 0.1106\right] $ & 0.0885$\left[ 0.0662\right]
$ \\ 
& \multicolumn{1}{|l|}{LSE} & 0.1540$\left[ 0.1816\right] $ & 0.1941$\left[
0.2158\right] $ & 0.1552$\left[ 0.1661\right] $ & 0.1819$\left[ 0.1398\right]
$ \\ 
& \multicolumn{1}{|l|}{WLSE} & 0.1297$\left[ 0.1147\right] $ & 0.1601$\left[
0.1841\right] $ & 0.1079$\left[ 0.1282\right] $ & 0.1223$\left[ 0.0997\right]
$ \\ \hline
$50$ & \multicolumn{1}{|l|}{MLE} & 0.0095$\left[ 0.0080\right] $ & 0.0102$%
\left[ 0.0026\right] $ & 0.0010$\left[ 0.0352\right] $ & 0.0028$\left[ 0.0039%
\right] $ \\ 
& \multicolumn{1}{|l|}{UMVUE} & 0.0193$\left[ 0.0248\right] $ & 0.0467$\left[
0.0788\right] $ & 0.0035$\left[ 0.0502\right] $ & 0.0177$\left[ 0.0207\right]
$ \\ 
& \multicolumn{1}{|l|}{PCE} & 0.04562$\left[ 0.0803\right] $ & 0.0816$\left[
0.1098\right] $ & 0.0225$\left[ 0.0651\right] $ & 0.0412$\left[ 0.0388\right]
$ \\ 
& \multicolumn{1}{|l|}{LSE} & 0.10410$\left[ 0.1558\right] $ & 0.1511$\left[
0.1772\right] $ & 0.0998$\left[ 0.1136\right] $ & 0.1196$\left[ 0.1006\right]
$ \\ 
& \multicolumn{1}{|l|}{WLSE} & 0.0974$\left[ 0.10471\right] $ & 0.1153$\left[
0.1380\right] $ & 0.0296$\left[ 0.0441\right] $ & 0.0757$\left[ 0.0567\right]
$ \\ \hline
$100$ & \multicolumn{1}{|l|}{MLE} & 0.00021$\left[ 0.00027\right] $ & 
\multicolumn{1}{|l|}{0.00049$\left[ 0.0016\right] $} & \multicolumn{1}{|l|}{
0.00016$\left[ 0.0017\right] $} & \multicolumn{1}{|l|}{0.00018$\left[ 0.00047%
\right] $} \\ 
& \multicolumn{1}{|l|}{UMVUE} & 0.0022$\left[ 0.0012\right] $ & 0.0050$\left[
0.0042\right] $ & 0.00077$\left[ 0.0123\right] $ & 0.0092$\left[ 0.0084%
\right] $ \\ 
& \multicolumn{1}{|l|}{PCE} & 0.0033$\left[ 0.0016\right] $ & 0.00708$\left[
0.0056\right] $ & 0.0087$\left[ 0.0209\right] $ & 0.0104$\left[ 0.0196\right]
$ \\ 
& \multicolumn{1}{|l|}{LSE} & 0.0727$\left[ 0.0801\right] $ & 0.0930$\left[
0.1004\right] $ & 0.0299$\left[ 0.0674\right] $ & 0.0789$\left[ 0.0817\right]
$ \\ 
& \multicolumn{1}{|l|}{WLSE} & 0.0131$\left[ 0.0015\right] $ & 0.0431$\left[
0.086\right] $ & 0.0099$\left[ 0.0376\right] $ & 0.0327$\left[ 0.0229\right] 
$ \\ \hline
\end{tabular}%
\end{table}%

\bigskip

\bigskip

\end{document}